\documentclass[conference]{IEEEtran}

\usepackage{cite}
\usepackage{graphicx}
\usepackage{epstopdf}
\usepackage{epsfig}
\usepackage[latin1]{inputenc}
\usepackage{amsthm}
\usepackage{subfigure}

\hyphenation{op-tical net-works semi-conduc-tor}

\theoremstyle{plain}
\newtheorem{theorem}{Theorem}

\theoremstyle{remark}
\newtheorem{lemma}{Lemma}

\begin{document}

\title{The Price of Updating the Control Plane in Information-Centric Networks}

\author{\,}

\author{Bita Azimdoost$^{\dagger}$, Cedric Westphal$^\ddagger$$^*$, and Hamid R. Sadjadpour$^\dagger$\\
$^\dagger$Department of Electrical and $^\ddagger$Computer Engineering, 
University of California Santa Cruz\\
$^*$ Huawei Innovation Center, Santa Clara, CA
}

\maketitle

\begin{abstract}
We are studying some fundamental properties of the interface between control and data planes in Information-Centric Networks.
 We try to evaluate the traffic between these two planes based on allowing a minimum level of acceptable distortion in the network state representation in the control plane. We apply our framework to content distribution, and see how we can compute the overhead of maintaining the location of content in the control plane. This is of importance to evaluate content-oriented network architectures: we identify scenarios where the cost of updating the control plane for content routing overwhelms the benefit of fetching a nearby copy. We also show how to minimize the cost of this overhead when associating costs to peering traffic and to internal traffic for operator-driven CDNs.
\end{abstract}

\section{Introduction}

Routing in Information-Centric Networks (ICNs) is performed based upon content names. Name Resolution Service-based ICNs \cite{Ahlgren2012Survey} require the routing mechanism to be dynamically updated with the content location. Content location from the forwarding plane needs to be delivered to the control plane. This raises the following question: depending on the size of the domain being controlled, of the underlying state space, of the dynamics of the evolution of the state in the forwarding plane, how much data is required to keep the ICN control plane up to date of the content location?

We consider the issue of maintaining a consistent view of the underlying state at the control layer, through an abstracted mechanism, which can be applied to a wide range of scenarios: we consider the underlying state as an evolving random  process, and calculate the information theoretic rate that this process would create to keep the representation of this state up-to-date in the control plane. This provides a lower bound on the bandwidth overhead required for the control plane to have an accurate view of the forwarding plane.

We then illustrate the power of our model by focusing on the specific case of locating content in ICNs. Enabling content routing has attracted a lot of attention recently, and thus we are able to shed some light on its feasibility.  In this case, the underlying state depends on the size and number of caches, on the request for content process and on the caching policy. We  apply our framework to derive the bandwidth needed to accurately locate a specific piece of content. We observe that there is a  trade-off for keeping an up-to-date view of the network at the cost of  significant  bandwidth utilization, versus the gain achieved by fetching the nearest copy of the content. We consider some simple scenarios to illustrate this trade-off.

Our contribution is as follows:\\
$\circ$ A framework is presented to quantify the minimal amount of information required to keep a (logical) control plane aware of the state of the forwarding plane. We believe this framework to be useful in many distributed systems contexts (Lemma \ref{lem:01});\\
$\circ$ This framework is applied to the specific case of locating content, and the effect of the availability of caches, the caching policy and the content popularity on content location is studied. We can thus apply our results to some of the content-oriented architectures (Theorems \ref{thm:1}, \ref{thm:2});\\
$\circ$ We see how our framework allows to define some optimal policies with respect to the content that should be cached for an Information-Centric Network (Section\ref{sec:costanalysis}).

We quickly note that our framework does not debate the merit of centralized vs distributed, as the control layer we consider could be either. For a routing example, our model would provide a lower-bound estimate of the bandwidth for, say OpenFlow to update a centralized SDN controller, or for a BGP-like mechanism to update distributed routing instances.

Our results are theoretic in nature, and provide a lower bound on the overhead. We hope they will provide a practical guideline for protocol designers to optimize the protocols which synchronize the network state and the control plane.

The rest of the paper is organized as follows. After going over some related work in section \ref{sec:related}, we  introduce our framework to model the protocol overhead in section \ref{sec:protocol}. Section \ref{sec:scenarios} utilizes the derived model to study some simple caching networks. We show the power of the model in the protocol design by computing the cost of content routing in Section~\ref{sec:costanalysis} and suggesting a cache management policy. 
Finally, section \ref{sec:conclusion} concludes the paper.

\section{Related Work}
\label{sec:related}

SDN makes the separation explicit between the control and forwarding layer. Interactions between the control and forwarding planes has been pointed out as one of the bottlenecks of OpenFlow~\cite{McKeown2008OpenFlow}. As a consequence, \cite{Curtis2011Devo} or \cite{Yu2010Difane} attempt to reduce the amount of interaction in between the switches and the control layer. \cite{Levin2012Logically} studies the gap between the state of the actual system and the view of the (logically centralized) controller, focusing on consistency. There has been no attempt to model the interaction between the control and forwarding layers to our knowledge.

The control plane needs to obtain adequate information about the underlying states so that the network can perform within a satisfactory range of distortion. The first theoretical study of this information was conducted by Gallager in \cite{Gallager1976Basic}, which utilized rate distortion theory to calculate the information required to extract network parameters. 
\cite{Wang2012Cost} applied these ideas to mobile wireless networks. An information-theoretic framework to model the relationship between network information and network performance was derived in \cite{Hong2009Impact}.

One impetus to study the relationship between the control layer and the network layer comes from the increased network state complexity from trying to route directly to content. Request-routing mechanisms have been in place for a while\cite{Barbir2003Requestrouting} and proposals~\cite{Davie2012Framework} have been suggested to share information between different CDNs, in essence enabling the control planes of two domains to interact. Many architectures have been proposed that are oriented around content\cite{Gritter2001Architecture,Koponen2007Dataoriented,Jacobson2009Networking,Zhang2010Named,Pursuit,Ahlgren2012Survey} and some have raised concerns about the scalability of properly identifying the location of up to $10^{15}$ pieces of content\cite{Ghodsi2011InformationCentric}. Our model presents a mathematical foundation to study the pros and cons of such architecture.

Cache management goes jointly with content routing. \cite{Tang2008BenefitBased}\cite{Bhattacharjee1998Self}\cite{Cho2012WAVE} present cache management policies. Some cooperative cache management algorithms have been developed in \cite{Borst2010Distributed} which attempt to maximize the traffic volume served from cache and minimize the bandwidth cost in content distribution networks. \cite{Sourlas2012Autonomic} proposes some online cache management algorithms for Information Centric Networks (ICNs) where all the contents are available by caching in the network. \cite{Chai2012Cache} investigates if caching only in a subset of nodes along the path in ICNs can achieve better performance in terms of cache hit rate. None of these work study the overhead required to learn the content location. 

\section{Protocol Overhead Model}
\label{sec:protocol}

We now turn our attention to the mechanism to synchronize the view at the control layer with the underlying network state. Assume that $S_X(t)$ describes the state of random process $X$ in a network at time $t$. 

In order to update the control plane's information about the states of $X$ in the network, the forwarding plane must send update packets regarding those states to the control plane whenever some change occurs. Let $\hat{S}_X(t)$ denote the control plane's perceived state of $X$ at time $t$. It is obvious that no change in $\hat{S}_X$ will happen before $S_X$ changes, and if $S_X$ changes, the Control plane may or may not be notified of that change. Therefore,  there are some instances of time where $\hat{S}_X\ne S_X$. 

In this paper, we consider that the state can have two values $'0'$ and $'1'$. For instance, a link can be up or down; or a piece of content can be present at a node, or not. 

Please note: It is easy to see that such boolean state space can be generalized to other possible values for $S_X$. For instance, if one wanted to measure the congestion on a link, one could quantize the link congestion into bins (say bins $b_1$ to $b_{10}$ for normalized link utilization between 0 and 0.1, 0.1 to 0.2, $\ldots$, 0.9 to 1) and map a the link utilization to a 0-1 variable such that $b_i = 1$ if the current link utilization is in $((i-1)/10,i/10)$ and 0 otherwise. Therefore, there is no loss of generality of selecting a 0-1 variable, and it greatly simplifies the exposition of the results. 

Let $\{Y_m\}_{m=1}^{\infty}$ and $\{Z_m\}_{m=1}^{\infty}$ denote the sequences of $'0'$s and $'1'$s time durations of $S_X(t)$ respectively, and $\{T_m\}_{m=1}^{\infty}$ denote the times of changes. We assume that $Y_m$ is an i.i.d sequence with probability density function (pdf) $f_Y(y)$ and mean $\theta_X$, and $Z_m$ is another i.i.d. sequence with pdf $f_Z(z)$ and mean $\tau_X$. We also assume that any two $Y_m$ and $Z_m$ are mutually independent\footnote{There is also no loss of generality in assuming independence of these processes for the following reason: we consider large distributed systems, where the input is driven by a large population of users (smaller systems offer no difficulty in tracking in the control plane what is happening in the data plane). It is a well known result that the aggregated process resulting from a large population of uncoordinated users will converge to a Poisson process, and therefore the events in the future are independent of the events in the past and depend only on the current state.}.

Fig. \ref{fig:statetime} illustrates the time diagram of state changes of such random process which is the state of the forwarding plane in the network being announced to the control plane. 

\begin{figure}[http]
    \center
      \includegraphics[scale=0.55,angle=0]{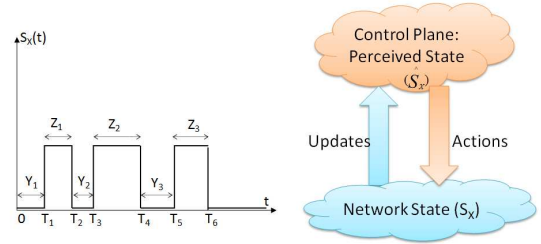}\\
      \caption{\textit{Time diagram of $S_X(t)$: the state of random process $X$ at time $t$.}}
    \label{fig:statetime}
\end{figure}

$\hat{S}_X$ and $S_X$ may differ in two cases; first, when the state of $X$ is changed from $'0'$ to $'1'$ (change type I) but the control plane is not notified ($\hat{S}_X=0,\ S_X=1$); second, when the state of $X$ is changed from $'1'$ to $'0'$ (change type II) and the control plane still has the old information about it ($\hat{S}_X=1,\ S_X=0$). Here we calculate the minimum rate at which the underlying plane has to update the state of $X$ so that the mentioned errors are less than some values $\epsilon_1$ for the first type of error, and $\epsilon_2$ for the second type, respectively. $\epsilon_1$ and $\epsilon_2$ can be viewed as probability of false negative and false positive alarms at the controller.

We make an additional assumption that the delay of the network is negligible with respect to the time scale of the changes in the state of the system, and the control plane will be aware of the announced state immediately (the alternative - that the state of the system changes as fast or faster as the control plane can be notified of these changes - is obviously unmanageable). Thus, the above errors may occur just when the forwarding plane does not send an update about a change.

The main result now can be stated as a Lemma (with the proof in Appendix).

\begin{lemma}\label{lem:01}
	If the ups and downs in the state of $X$ follow some distributions with means $\tau_X$ and $\theta_X$, respectively, then the minimum update rate $R_X(\epsilon_1,\epsilon_2)$ (number of update packets per second) satisfying the mentioned distortion criterion is given by
\begin{eqnarray}
\frac{1}{\tau_X+\theta_X}(2-\frac{\epsilon_1\frac{\theta_X}{\tau_X}}{\frac{\theta_X}{\tau_X+\theta_X}-\epsilon_2}-\frac{\epsilon_2\frac{\tau_X}{\theta_X}}{\frac{\tau_X}{\tau_X+\theta_X}-\epsilon_1}) \label{eq:rX}
\end{eqnarray}
if $\frac{\epsilon_2}{1-\epsilon_2} < \frac{\theta_X}{\tau_X} < \frac{1-\epsilon_1}{\epsilon_1}$ and $\epsilon_2 \tau_X+\epsilon_1 \theta_X < \frac{\tau_X \theta_X}{\tau_X+\theta_X}$. Otherwise an update of rate zero can satisfy the distortion criteria.
\end{lemma}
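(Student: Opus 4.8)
The plan is to treat $S_X(t)$ as an alternating renewal process and to analyse the pair $(S_X(t),\hat{S}_X(t))$ by a renewal--reward argument. The first observation is that $\hat{S}_X$ changes only when the forwarding plane emits an update, and that it is never beneficial to emit an update merely confirming the value the controller currently holds; hence we may restrict attention to update policies that act only at the transition instants $T_m$, deciding, at each $0\to1$ transition, whether to tell the controller ``now $1$'', and at each $1\to0$ transition whether to tell it ``now $0$''. In steady state such a policy is described by two numbers --- the long-run fractions of type-I and type-II transitions that are reported --- or, equivalently, by the two error probabilities it induces, which are monotone in those fractions.

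Next I would embed a Markov chain on the four joint states $(\hat{S}_X,S_X)\in\{(0,0),(0,1),(1,0),(1,1)\}$ observed at the epochs $T_m$, equipped with holding-time laws $f_Y$ (mean $\theta_X$) on the $S_X=0$ states and $f_Z$ (mean $\tau_X$) on the $S_X=1$ states. Computing the stationary distribution of this chain and invoking the renewal--reward theorem (legitimate since $\theta_X,\tau_X<\infty$) gives closed forms for the time-stationary probabilities $P(\hat{S}_X=0,S_X=1)$ and $P(\hat{S}_X=1,S_X=0)$ and for the expected number of update packets per renewal cycle; dividing the latter by the mean cycle length $\theta_X+\tau_X$ yields the update rate. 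Because only the means $\theta_X,\tau_X$ enter these ratios, the answer is insensitive to the shapes of $f_Y$ and $f_Z$, exactly as the statement claims; the precise algebraic form of (\ref{eq:rX}) then comes from the bookkeeping of how the ``set to $1$'' and ``return to $0$'' notifications are charged and how the two error events are coupled through the shared trajectory.

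Then I would carry out the optimisation. Since the update rate is monotone in the degree of synchronisation while the error probabilities move in the opposite direction, at the optimum both distortion constraints are tight; setting the two error expressions equal to $\epsilon_1$ and $\epsilon_2$, solving the resulting system for the two policy parameters, and substituting back produces (\ref{eq:rX}). The inequalities $\frac{\epsilon_2}{1-\epsilon_2}<\frac{\theta_X}{\tau_X}<\frac{1-\epsilon_1}{\epsilon_1}$ and $\epsilon_2\tau_X+\epsilon_1\theta_X<\frac{\tau_X\theta_X}{\tau_X+\theta_X}$ are precisely the conditions that the recovered parameters be admissible (probabilities in $[0,1]$, so that the denominators $\frac{\theta_X}{\tau_X+\theta_X}-\epsilon_2$ and $\frac{\tau_X}{\tau_X+\theta_X}-\epsilon_1$ are positive) and that the resulting rate be strictly positive. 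When any of them fails, one of the degenerate policies ``never report type-I changes'', ``never report type-II changes'', or ``report arbitrarily rarely'' already meets the distortion criterion, so the minimum rate is $0$.

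The step I expect to be the real obstacle is the lower-bound (optimality) direction: showing that no policy can beat this two-parameter family --- in particular policies that emit updates strictly between transitions of $S_X$, that emit redundant updates, or that are history-dependent or non-stationary. I would dispose of this by a pathwise exchange argument: conditioning on the trajectory of $S_X$, sliding each update onto the nearest adjacent transition of $S_X$ never increases the update count and never worsens either error event; among transition-only policies a standard averaging/convexity argument for the induced Markov-reward problem shows that a stationary randomised policy is optimal; and the stationary randomised policies are exactly the family analysed above. A secondary, more routine, matter is the careful treatment of the boundary cases and the ``zero-rate'' regime, and checking the hypotheses of the renewal--reward theorem.
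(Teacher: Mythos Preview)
Your approach is essentially the paper's: restrict to policies that decide at each transition whether to report, parametrise by the two reporting fractions $U^1_X,U^2_X$, compute the induced distortions by a renewal--reward argument, and invert at the boundary $D_1=\epsilon_1$, $D_2=\epsilon_2$ to obtain the rate $(U^1_X+U^2_X)/(\tau_X+\theta_X)$. The only mechanical difference is that the paper writes the steady-state relations directly (expressing $D_1$ through $U^1_X$ and $D_2$, and symmetrically) instead of passing through your four-state embedded chain, and it does not address the optimality direction you correctly identify as the nontrivial step --- so your plan is, if anything, more complete than the original.
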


Equation \ref{eq:rX} shows the minimum update rate for state of a single random variable X in the underlying plane so that an accepted amount of distortion is satisfied. The total rate and consequently the total protocol overhead for keeping the control layer informed of the forwarding layer is the combination of all the overheads needed for all the random processes of the underlying layer, which may be independent of each other or have some impact on each other.
For example, the locations and velocities of different nodes in a mobile wireless network may be considered independent of each other, thus the total overhead will be simply the sum of the individual overheads. But in a caching network, the existence of a content in a specific cache strongly depends on the positions of the other contents if the cache storage sizes are limited.

In the following sections, we apply our model to cache networks and study the total data retrieval cost, including the protocol overhead.

\section{Content Location in ICNs}
\label{sec:scenarios}

Information-centric networks (ICN) require the control layer to know at least one location for each piece of data. However, many ICNs (for instance~\cite{Pursuit}), attempt to set up a route to a nearby copy by requesting the content from a pub/sub mechanism. The pub/sub rendez-vous point needs to know the location of the content. This is highly dynamic, as content can be cached, or expunged from the cache at any time. NDN~\cite{Zhang2010Named} also assumes that the routing plane is aware of multiple locations for a piece of content~\footnote{The routing (in NDN in particular) could know only one route to the content publisher or to an origin server and find cached copies opportunistically on the path to this server. But Fayazbakhsh et al~\cite{Fayazbakhsh2013Less} have demonstrated that the performance of such an ICN architecture would bring little benefit over that of strict edge caching.}.

The request process impacts the cache state, and we make the usual assumption that the items are requested according to a Zipf distribution with parameter $\alpha$; meaning that the popularity of an item $i$ is $\alpha_i=\frac{i^{-\alpha}}{\sum_{k=1}^M k^{-\alpha}}$, where $M$ is the size of the content set, and can be arbitrarily large. We also assume from now on that the Least-Recently-Used (LRU) replacement policy is used in the caches, as it is a common policy and has been suggested in some ICN architectures~\cite{Jacobson2009Networking}. (Other caching policies can be handled in a similar manner.) We denote the cache size by $L_c$. From the popularity $\alpha_i$, Dan and Towsley \cite{Dan1990Approximate}  and Che et al|~\cite{Che2002Hierarchical} provide a model to calculate the probability $\rho_i$ of an item being in the cache under the Independent Reference Model (IRM). Since we know the $\alpha_i$ distribution, we consider that $\rho_i$ is also given.

In the following sections we use the results of section \ref{sec:protocol} together with LRU policy results to study the total control packet rates needed to keep the control plane updated about the items in the caches in two scenarios: 
\begin{itemize}
\item in scenario $\romannumeral 1$, we consider nodes updating the control plane of a domain (say, an AS) so as to route content to a copy of the cache within this domain if its available. We denote the control plane function which locates the content for each request as the Content Resolution System (CRS);
\item in scenario $\romannumeral 2$, we consider two controllers over two neighboring domains updating each other.
\end{itemize}

\subsection{Scenario $\romannumeral 1$: Intra-AS Cache-Controller Interaction}
\label{subsec:scenario1}

We will consider an autonomous network containing $N$ nodes (terminals), each sending requests for items $i=1,...,M$ with sizes $B_i$ according to a Poisson distributed process with rate of $\lambda'_i$. The total request rate for all the items from each node is denoted by $\lambda'=\sum_{i=1}^{M}\lambda'_i$. Note that the total request rate of each terminal is a fixed rate independent of the total number of nodes and items while the total requests for all the nodes is a function of $N$ (namely $N\lambda'$).

Suppose that there are $N_c$ caches in the system ($\mathcal{V}_c=\{v_1,...,v_{N_c}\}$) each with size $L_c$ that can keep (and serve) any item $i$ for some limited amount of time $\tau_i$, which depends on the cache replacement policy. For simplicity, we assume that all the caches are similar to each other~\footnote{We can easily extend to the case of heterogenous caches at the cost of notation complexity. For instance, Theorem~\ref{thm:1} below can be stated as a sum over all $N_c$ possible types of caches with $N_c$ different $\rho_i$'s for each type of cache, instead of a product by $N_c$ of identical terms. Our purpose is to describe the homogenous case, and let the reader adapt the heterogenous case to suit her/his specific needs.} and the rate of requests for item $i$ received by each cache is $\lambda_i=\lambda'_iN/N_c$. Assume that $\bar{N}^c_i$ caches store item $i$ during each download ($\bar{N}^c_i$ could be a single copy near the requester, or multiple copies at different caches).

Whenever a client has a request for an item, it needs to discover a location of that item, preferentially within the AS, and it downloads it from there. To do so, it will ask a (logically) centralized \emph{Content Resolution System (CRS)} or will locate the content by any other non-centralized locating protocol.

If the network domain is equipped with a CRS, it is supposed to have the knowledge of all the caches, meaning that each cache sends its item states (local presence or absence of each item) to the CRS whenever some state changes. 

Depending on the caching policy, whenever a piece of content is being downloaded, either no cache, all the intermediate caches on the path, or just the cache directly connected to the requester stores it in its content store independently of the content state in the other caches, or refresh it if it already contains it.

We want to compute the update rate for this system assuming that each downloaded piece of content is stored only at the cache directly connected to the requester, and demonstrate the following theorem.

\begin{theorem}\label{thm:1}
The total update rate is the summation of the rate $R_{i}(\epsilon_1,\epsilon_2)$ for all $i$ with
\begin{eqnarray}
   R_{i}(\epsilon_1,\epsilon_2)&&\geq N_c\lambda_i (1-\rho_i) \times \label{eq:ri}\\
	&&(2-\frac{\epsilon_1(1-\rho_i)}{\rho_i(1-\rho_i-\epsilon_2)}-\frac{\epsilon_2 \rho_i}{(1-\rho_i)(\rho_i-\epsilon_1)}) \nonumber
\end{eqnarray}

if $\epsilon_1 < \rho_i < 1-\epsilon_2$ and $\epsilon_1 (1-\rho_i)+\epsilon_2 \rho_i < \rho_i(1-\rho_i)$. Otherwise no update is needed.
\end{theorem}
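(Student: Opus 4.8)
The plan is to derive Theorem~\ref{thm:1} as a direct application of Lemma~\ref{lem:01}: the substantive part is identifying the two time-scale parameters $\tau_X,\theta_X$ of Lemma~\ref{lem:01} with cache-level quantities, after which the stated formula is just a change of variables in \eqref{eq:rX}. Fix an item $i$ and a single cache $v_j$, and let $S_X(t)$ be the indicator that item $i$ is present in $v_j$ at time $t$. This is a $\{0,1\}$-valued process of exactly the kind Lemma~\ref{lem:01} treats, so it suffices to compute the mean ``$0$'' duration $\theta_X$ and mean ``$1$'' duration $\tau_X$ for this process, verify the i.i.d./independence hypotheses, and plug in.

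The key step is the observation that, because a copy of $i$ is placed in $v_j$ \emph{only} when $i$ itself is requested at $v_j$ (downloaded content is stored only at the cache directly connected to the requester), every ``$0$'' interval of $S_X$ runs from an eviction epoch of $i$ to the next request for $i$ at $v_j$. Requests for $i$ reaching $v_j$ form a Poisson process of rate $\lambda_i=\lambda_i' N/N_c$ (a superposition of the per-terminal Poisson streams), and under the IRM these are independent of the requests for other items that cause the eviction; by memorylessness the residual time from an eviction epoch to the next request for $i$ is exponential with rate $\lambda_i$, the successive ``$0$'' durations are i.i.d.\ with mean $\theta_X=1/\lambda_i$, and the ``$1$'' durations are i.i.d.\ by stationarity of the IRM and independent of the ``$0$'' durations. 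I would then recover $\tau_X$ not by analysing the LRU dynamics directly but from the stationary in-cache probability: by the renewal--reward theorem $\rho_i=\tau_X/(\tau_X+\theta_X)$, so $\tau_X=\rho_i/(\lambda_i(1-\rho_i))$ and $\tau_X+\theta_X=1/(\lambda_i(1-\rho_i))$. This is where the known quantity $\rho_i$ (the Dan--Towsley/Che model output) does the real work, standing in for something we could not otherwise obtain in closed form.

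The rest is bookkeeping. Substituting $\frac{1}{\tau_X+\theta_X}=\lambda_i(1-\rho_i)$, $\frac{\theta_X}{\tau_X}=\frac{1-\rho_i}{\rho_i}$, $\frac{\tau_X}{\theta_X}=\frac{\rho_i}{1-\rho_i}$, $\frac{\theta_X}{\tau_X+\theta_X}=1-\rho_i$ and $\frac{\tau_X}{\tau_X+\theta_X}=\rho_i$ into \eqref{eq:rX} and simplifying the two fractions gives the per-cache rate
\[
\lambda_i(1-\rho_i)\left(2-\frac{\epsilon_1(1-\rho_i)}{\rho_i(1-\rho_i-\epsilon_2)}-\frac{\epsilon_2\rho_i}{(1-\rho_i)(\rho_i-\epsilon_1)}\right).
\]
Since the $N_c$ caches are statistically identical and their in-cache indicator processes are mutually independent (each request is served by exactly one cache, fed by a disjoint set of terminals), the minimum rates add: $R_i(\epsilon_1,\epsilon_2)$ is $N_c$ times the display above, which is \eqref{eq:ri}, and the total update rate is $\sum_i R_i(\epsilon_1,\epsilon_2)$. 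Finally I would translate the feasibility conditions of Lemma~\ref{lem:01}: $\frac{\epsilon_2}{1-\epsilon_2}<\frac{\theta_X}{\tau_X}<\frac{1-\epsilon_1}{\epsilon_1}$ becomes $\epsilon_1<\rho_i<1-\epsilon_2$ after clearing denominators, and $\epsilon_2\tau_X+\epsilon_1\theta_X<\frac{\tau_X\theta_X}{\tau_X+\theta_X}$ becomes $\epsilon_1(1-\rho_i)+\epsilon_2\rho_i<\rho_i(1-\rho_i)$ after dividing through by $\tau_X+\theta_X$; outside this region Lemma~\ref{lem:01} already gives rate zero, matching ``otherwise no update is needed''.

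The only genuine obstacle I anticipate is the rigorous justification of $\theta_X=1/\lambda_i$ together with the i.i.d.\ and independence hypotheses required by Lemma~\ref{lem:01}: one must argue that an eviction epoch of $i$ is a stopping time for the thinned (item-$i$) request stream and then invoke the strong Markov/memoryless property to conclude that the ``$0$'' durations are i.i.d.\ exponential and independent of the ``$1$'' durations, and that IRM stationarity makes the ``$1$'' durations i.i.d. Once that is in place, the passage through the renewal--reward identity and the algebraic simplification of \eqref{eq:rX} are routine.
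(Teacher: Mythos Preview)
Your proposal is correct and follows essentially the same route as the paper: identify $S_X$ with the in-cache indicator at a single cache, use $\rho_i=\tau_X/(\tau_X+\theta_X)$ and $\theta_X=1/\lambda_i$, substitute into Lemma~\ref{lem:01}, and multiply by $N_c$ using cache independence. Your treatment is in fact more careful than the paper's about justifying $\theta_X=1/\lambda_i$ via memorylessness and about the i.i.d./independence hypotheses; the paper simply asserts that the rate of generating copies at a cache equals $1/\theta_i=\lambda_i$ and proceeds.
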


\begin{proof}
Let the random process $X$ in the forwarding plane denote the existence of item $i$ in cache $v_j$ at time $t$, which is needed to be announced to the control plane (CRS). We assume that all the caches have the same characteristics resulting in the same average up and down duration times for each item in all the caches. Let $\tau_i$ denote the mean duration item $i$ spends in any cache $v_j$, and $\theta_i$ denote the mean duration of item $i$ not being in the cache. We define $u_i=\tau_i+\theta_i$.

It can be seen that at the steady-state, the probability of cache $j$ containing item $i$ will be $\rho_i=\frac{\tau_i}{u_i}=\frac{\tau_i}{\theta_i+\tau_i}$.

In order to keep the CRS updated about the content states in the network, all the nodes have to send update packets regarding their changed items to the CRS. All the assumptions of section \ref{sec:protocol} are valid here. Thus, by replacing $\tau_X$ and $\theta_X$ in equation \ref{eq:rX} with $\tau_i$ and $\theta_i$ respectively, the result ($R_{ij}=R_X$) shows the minimum rate at which each cache  $v_j$ has to send information about item $i$ to the CRS.
	
Since we assumed that each cache stores items independent of the items in other caches, the total update rate for item $i$, is the sum of the update rates in all caches which is $R_i(\epsilon_1,\epsilon_2)=N_cR_{ij}(\epsilon_1,\epsilon_2)$, where $R_{ij}$ is the update rate obtained through equation \ref{eq:rX} for item $i$ at cache $v_j$.

The total rate of generating (or refreshing) copies of item $i$ at each cache is $\lambda_i$, which equals to $\frac{1}{\theta_i}$.
Replacing the values of $\frac{\tau_i}{\tau_i+\theta_i}$ and $\frac{1}{\theta_i}$ in $R_{ij}$ with $\rho_i$ and $\lambda_i$ respectively, we can express the total update rate of item $i$ in terms of the probability of this item being in a cache.

This yields the result of equation~\ref{eq:ri} and the total update rate for all the items is the summation of these rates.
\end{proof}

One important consequence of Theorem~\ref{thm:1} is that for large $M$, due to the heavy tail of the popularity distribution, a significant number of the requests will be for items with low $\rho_i$, thus creating an update rate that, according to the first line of equation~\ref{eq:ri}, will not vanish. This means that updating the control plane of an ICN architecture would create a significant amount of traffic with little or not benefits.

Fig. \ref{fig:updatemiss} illustrates $\frac{R_i}{N_c\lambda_i}$ versus $1-\rho_i$. The only parameters that can change this graph are $\epsilon_1$ and $\epsilon_2$. The higher distortion we tolerate, the less update announcements we need to handle. As can be seen the update rate starts from zero for those items which are for sure in the cache, and does not need any CRS update. At the other end of the graph, for the items which are almost surely not in the cache, again no update is needed. The number of items which need some updates is decreasing when higher distortions are accepted.

\begin{figure}[http]
    \center
      \includegraphics[scale=0.18,angle=0]{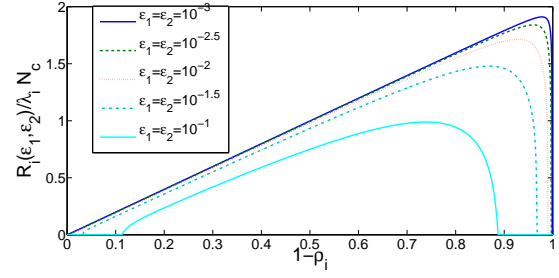}\\
      \caption{\textit{Total item $i$ cache-CRS update rate versus $1-\rho_i$ for different distortion criteria.}}
    \label{fig:updatemiss}
\end{figure}

Fig. \ref{fig:lruLc} shows the changes of the total update rate (scaled by $\frac{1}{\lambda N_c}$) versus the cache storage size, such that the distortion criteria defined by $(\epsilon_1,\epsilon_2)$ is satisfied. In this simulation $M=10^3$. Note that each change in a cache consists of one item entering into and one other item being expunged from the cache, therefore if no distortion is tolerable, this rate will be $2$ updates per change per cache.

\begin{figure}[http]
    \center
      \includegraphics[scale=0.18,angle=0]{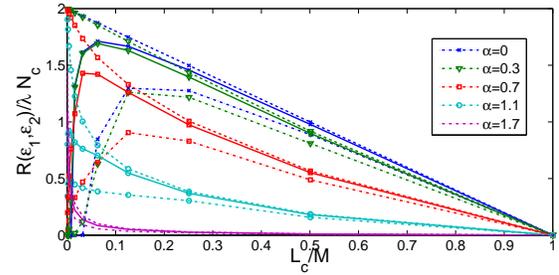}\\
      \caption{\textit{Total cache-CRS update rate (Updates per new arrival per cache) for different cache storage capacities ($M=10^3$) and $\epsilon_1=\epsilon_2=0$(dashed), $.01$(solid), $0.05$(dash-dotted)).}}
    \label{fig:lruLc}
\end{figure}

It can be observed that for very small storage sizes and small popularity index, almost each incoming item will change the status of the cache and a need for an update arises. When the storage size is still very small, the caches do not provide enough space for storing the items and reusing them when needed, so increasing the size will increase the update rate. At some point, the items will move down and up in the cache before going out, so increasing the storage size more than that will reduce the need to update. However, if the popularity index is large, then increasing cache size from the very small sizes will decrease the need to update since there are just a few most popular items which are being requested.

Moreover, as it is expected, the more distortion is tolerable, the CRS needs fewer change notifications. However, if the cache size is too big, or the popularity exponent is too high, fewer changes will occur, but almost all the changes are needed to be announced to the CRS. On the other hand, for small cache sizes accepting a little distortion will significantly decrease the update rate.

To figure out how the calculated rates perform in practice, we simulate an LRU cache with capacity $L_c=20$ items, which are selected from a catalog of size $M=1000$ items and are requested according to Zipf distribution with parameter $\alpha=0.7$. In these simulations we first estimate the item availability in the cache $\rho_i$, then using these estimated $\rho_i$ and according to equations \ref{eq:U1} and \ref{eq:U2}, we calculate the update probability in case of a change. We then run the simulation for $50000$ Poisson requests and measure the average generated distortion during 20 rounds of simulation. Fig. \ref{fig:EvalDist1} illustrates the results for the case where $\epsilon_1=\epsilon_2=0.01$.

\begin{figure}[http]
    \center
      \includegraphics[scale=0.18,angle=0]{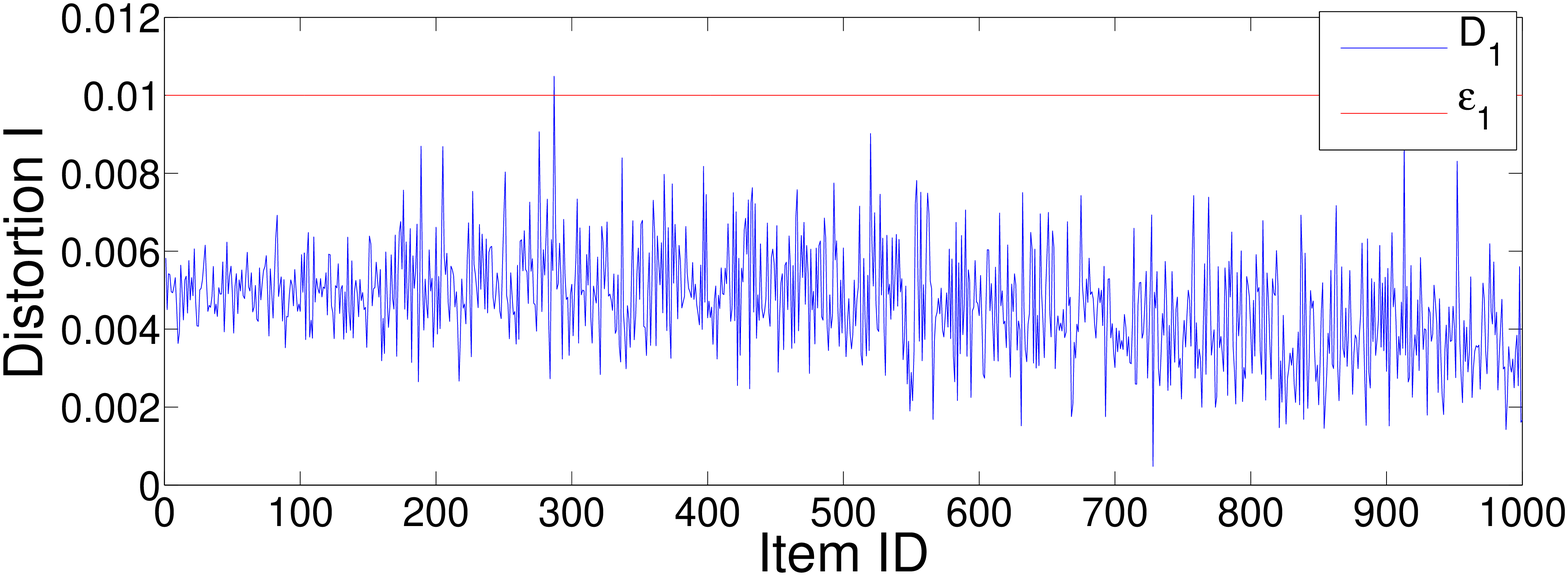}\\
			\includegraphics[scale=0.18,angle=0]{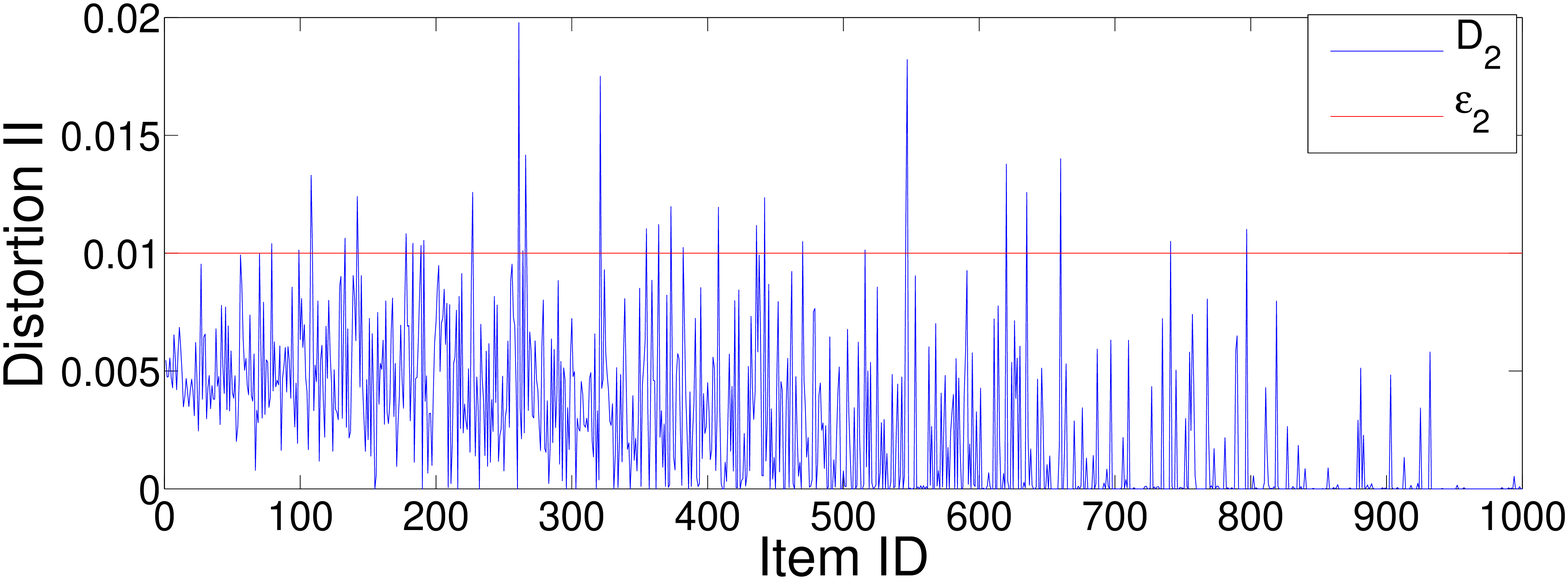}\\
      \caption{\textit{Measured distortion type I ($D_1$) and II ($D_2$) for $\epsilon_1=\epsilon_2=0.01$, $M=1000$, $L_c=20$, and $\alpha=0.7$.}}
    \label{fig:EvalDist1}
\end{figure}

It is observed that for a large portion of the items the distortion type I satisfies the distortion criteria. Distortion type II, however, has more unsatisfied distortion. The reason is that the calculated update rates are strongly dependent on the availability of the items in the cache and any small error in the estimation of $\rho_i$ may lead to some extra distortions. Since the $\rho_i$'s are mostly very small, not updating just one type II change may cause an error which remain in the system for a long time, and thus creating a large distortion.


\subsection{Scenario $\romannumeral 2$: Interaction Between Two ASs}
\label{subsec:scenario2}

In this section, we assume two separate neighboring Autonomous Systems $(AS_1$, $AS_2$) similar to the one discussed in section \ref{subsec:scenario1}. Each AS can be considered as a big storage containing all the items of its caches. Let $N^{(j)}_C$ and $L^{(j)}_C$ denote the number and size of the caches inside $AS_j$, and $P^{(j)}_i$ denote the probability of item $i$ being in at least one cache in $AS_j$ for $j=1,2$.

The control plane in each AS is informed about all the contents of the caches inside that AS through the mechanism described earlier in previous sections. The control planes (CRSs) of two separate ASs also need to be informed of the information stored in the other AS so that they can forward the requests to the proper AS in case of not being able to serve that request using the local caches.

Here we try to derive the minimum rate at which one AS ($AS_1$) needs to send information about its stored items to the other AS ($AS_2$) such that some distortion criteria is satisfied. Similar to the case of information updates inside one AS, two types of errors may happen; first, when $AS_1$ contains item $i$ and $AS_2$ is not aware of that; second, when $AS_1$ does not contain item $i$ and $AS_2$ assumes otherwise. Since each control plane first tries to find a copy of the requested data in some caches inside its own sub-network, none of these errors are  important if $AS_2$ itself contains item $i$. Therefore, the distortion happens only when item $i$ is not stored in any cache inside $AS_2$, and the distortion criteria is defined based on these errors.
$Pr(S_{i_2}=0, S_{i_1}=1, \hat{S}_{i_1}=0)<\epsilon_1$ and $Pr(S_{i_2}=0, S_{i_1}=0, \hat{S}_{i_1}=1)<\epsilon_2$.

$S_{i_j}$ is the state of item $i$ in $AS_j$ , and $\hat{S}_{i_j}$ denotes the state of item $i$ in $AS_j$ perceived by the other AS ($j=1,2$).

We assume that the existence of item $i$ in different ASs are independent of each other. Consequently, the perceived state of item $i$ in one AS by the other AS is independent of the existence of that item in the latter AS. Therefore,

\begin{equation}
Pr(S_{i_2}=0)Pr(S_{i_1}=1, \hat{S}_{i_1}=0)<\epsilon_1 \nonumber
\end{equation}
\begin{equation}
Pr(S_{i_2}=0)Pr(S_{i_1}=0, \hat{S}_{i_1}=1)<\epsilon_2
\end{equation}

Replacing $Pr(S_{i_2}=0)$ by $1-P^{(2)}_i$ we will have
\begin{eqnarray}
Pr(S_{i_1}=1, \hat{S}_{i_1}=0)<\frac{\epsilon_1}{1-P^{(2)}_i} \nonumber \\
Pr(S_{i_1}=0, \hat{S}_{i_1}=1)<\frac{\epsilon_2}{1-P^{(2)}_i}
\end{eqnarray}

$(1-P^{(2)}_i)$ is the probability of item $i$ not being available in $AS_2$ and is equal to $(1-\rho^{(2)}_i)^{N^{(2)}_C}$, where $\rho^{(2)}_i$ denotes the probability of item $i$ in each cache of $AS_2$. Let $\epsilon'_1=\frac{\epsilon_1}{1-P^{(2)}_i}$ and $\epsilon'_2=\frac{\epsilon_2}{1-P^{(2)}_i}$.
According to Lemma \ref{lem:01} the total information rate regarding item $i$ in $AS_1$ reported to $AS_2$ is given by
\begin{eqnarray}
	R^{(1)}_i(\epsilon'_1,\epsilon'_2)&\geq& \frac{1}{\tau^{(1)}_i+\theta^{(1)}_i} \times  \label{eq:r1i}  \\
&&(2-\frac{\epsilon'_1\frac{\theta^{(1)}_i}{\tau^{(1)}_i}}{\frac{\theta^{(1)}_i}{\tau^{(1)}_i+\theta^{(1)}_i}-\epsilon'_2}-\frac{\epsilon'_2\frac{\tau^{(1)}_i}{\theta^{(1)}_i}}{\frac{\tau^{(1)}_i}{\tau^{(1)}_i+\theta^{(1)}_i}-\epsilon'_1}) \nonumber
\end{eqnarray}

if $\frac{\epsilon'_2}{1-\epsilon'_2} < \frac{\theta^{(1)}_i}{\tau^{(1)}_i} < \frac{1-\epsilon'_1}{\epsilon'_1}$ and $\epsilon'_2 \tau^{(1)}_i+\epsilon'_1 \theta^{(1)}_i < \frac{\tau^{(1)}_i \theta^{(1)}_i}{\tau^{(1)}_i+\theta^{(1)}_i}$. Otherwise an update rate of zero can satisfy the distortion criteria.
	
In the above equations $\tau^{(1)}_i$ and $\theta^{(1)}_i$ are the average durations where item $i$ is available in $AS_1$ and the durations where it is not, respectively.

Assume that the rate of generating or refreshing item $i$ in at least one cache of $AS_1$ is denoted by $\lambda^{(1)}_i$. This rate is equal to the total rate of requests for item $i$ from all the users connected to $AS_1$ which equals to $N\lambda_i$. Similar to the reasoning in the proof of Theorem \ref{thm:1} we can derive the results which can now be stated as the following Theorem:

\begin{theorem}\label{thm:2}
The total information rate $R^{(1)}(\epsilon_1,\epsilon_2)$ can be written as:
\begin{eqnarray}
   &R^{(1)}(\epsilon_1,\epsilon_2)\geq \sum_{i=1}^M R^{(1)}_i(\epsilon_1,\epsilon_2) \mbox{ where}&\nonumber\\
   &R^{(1)}_i(\epsilon_1,\epsilon_2)\geq N\lambda_i (1-P^{(1)}_i)\left(2 - \right.& \nonumber  \\
   &\frac{\epsilon_1(1-P^{(1)}_i)}{P^{(1)}_i(1-P^{(1)}_i)(1-P^{(2)}_i)-\epsilon_2P^{(1)}_i}-& \nonumber\\
	 &\frac{\epsilon_2P^{(1)}_i}{P^{(1)}_i(1-P^{(1)}_i)(1-P^{(2)}_i)-\epsilon_1(1-P^{(1)}_i)})& \nonumber\\
   &\mbox{for } \frac{\epsilon_1}{1-P^{(2)}_i} < P^{(1)}_i < 1-\frac{\epsilon_2}{1-P^{(2)}_i} \mbox{ and}&\nonumber \\
   &\epsilon_2P^{(1)}_i+\epsilon_1(1-P^{(1)}_i) < P^{(1)}_i(1-P^{(1)}_i)(1-P^{(2)}_i)& \label{eq:cond1}\nonumber\\
   &\mbox{and }R^{(1)}_i(\epsilon_1,\epsilon_2)=0 \mbox{ otherwise.}&
\end{eqnarray}

In these equations $P^{(1)}_i=1-(1-\rho^{(1)}_i)^{N^{(1)}_C},\ P^{(2)}_i=1-(1-\rho^{(2)}_i)^{N^{(2)}_C}$
and $\rho^{(1)}_i$ and $\rho^{(2)}_i$ can be calculated as in\cite{Dan1990Approximate}.
\end{theorem}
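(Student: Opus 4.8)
The plan is to mirror the proof of Theorem~\ref{thm:1}, applying Lemma~\ref{lem:01} to the aggregated state of item $i$ at the granularity of a whole autonomous system, and then re-expressing the result in terms of the per-AS availability probabilities $P^{(1)}_i$ and $P^{(2)}_i$.

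First I would fix the underlying random process: let $X$ record whether item $i$ is present in at least one cache of $AS_1$ (state $1$) or absent from all of them (state $0$) --- this is the quantity $AS_1$ must report to $AS_2$. Under the independence-of-caches assumption and the Poisson request model (so that, by the aggregation argument in the footnote of Section~\ref{sec:protocol}, the superposed arrival stream is again Poisson and the up/down durations of $X$ may be treated as independent i.i.d.\ sequences), the hypotheses of Lemma~\ref{lem:01} apply to $X$ with some means $\tau^{(1)}_i$ (item somewhere in $AS_1$) and $\theta^{(1)}_i$ (item nowhere in $AS_1$). At steady state this gives $P^{(1)}_i=\tau^{(1)}_i/(\tau^{(1)}_i+\theta^{(1)}_i)$, hence $\theta^{(1)}_i/(\tau^{(1)}_i+\theta^{(1)}_i)=1-P^{(1)}_i$ and $\theta^{(1)}_i/\tau^{(1)}_i=(1-P^{(1)}_i)/P^{(1)}_i$. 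Moreover the down period of $X$ ends exactly at the first request for item $i$ anywhere in $AS_1$ (which gets stored at the cache adjacent to the requester), so by memorylessness $\theta^{(1)}_i=1/\lambda^{(1)}_i=1/(N\lambda_i)$ and therefore $1/(\tau^{(1)}_i+\theta^{(1)}_i)=(1-P^{(1)}_i)/\theta^{(1)}_i=N\lambda_i(1-P^{(1)}_i)$ --- the counterpart of the prefactor $N_c\lambda_i(1-\rho_i)$ in Theorem~\ref{thm:1}, with the single ``super-cache'' $AS_1$ replacing $N_c$ independent caches.

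As already observed just before the statement, the only distortion events that matter are those conditioned on item $i$ being absent from $AS_2$; by independence of the two ASs this amounts to imposing the constraints of Lemma~\ref{lem:01} on $X$ with the inflated thresholds $\epsilon'_1=\epsilon_1/(1-P^{(2)}_i)$ and $\epsilon'_2=\epsilon_2/(1-P^{(2)}_i)$, which is precisely what produces equation~\ref{eq:r1i}. It then remains to substitute the steady-state identities above, together with $\epsilon'_1$ and $\epsilon'_2$, into equation~\ref{eq:r1i} and clear the nested fractions --- multiplying numerator and denominator of each bracketed term by $(1-P^{(2)}_i)$ and then by $P^{(1)}_i$ (resp.\ $1-P^{(1)}_i$) --- to arrive at the displayed formula for $R^{(1)}_i(\epsilon_1,\epsilon_2)$. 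In the same way the two feasibility conditions $\epsilon'_2/(1-\epsilon'_2)<\theta^{(1)}_i/\tau^{(1)}_i<(1-\epsilon'_1)/\epsilon'_1$ and $\epsilon'_2\tau^{(1)}_i+\epsilon'_1\theta^{(1)}_i<\tau^{(1)}_i\theta^{(1)}_i/(\tau^{(1)}_i+\theta^{(1)}_i)$ become, after dividing through by $\tau^{(1)}_i+\theta^{(1)}_i$ and clearing $(1-P^{(2)}_i)$, the stated conditions $\epsilon_1/(1-P^{(2)}_i)<P^{(1)}_i<1-\epsilon_2/(1-P^{(2)}_i)$ and $\epsilon_2P^{(1)}_i+\epsilon_1(1-P^{(1)}_i)<P^{(1)}_i(1-P^{(1)}_i)(1-P^{(2)}_i)$; outside this region Lemma~\ref{lem:01} already yields rate zero. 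Finally I would record $P^{(j)}_i=1-(1-\rho^{(j)}_i)^{N^{(j)}_C}$, which is immediate since item $i$ sits in each of the $N^{(j)}_C$ caches of $AS_j$ independently with probability $\rho^{(j)}_i$ (the $\rho^{(j)}_i$ coming from the LRU/IRM model of~\cite{Dan1990Approximate}), and sum over $i$ to get the total rate.

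I expect the genuine obstacle to be the modeling justification rather than the algebra: one must argue that the AS-level union process $X$ is entitled to the i.i.d./independence hypotheses of Lemma~\ref{lem:01}, even though the constituent caches are coupled through the shared request stream (so the successive ``up'' durations of the union are not literally independent or identically distributed), and that $\theta^{(1)}_i=1/(N\lambda_i)$ is the correct identification of the down-rate. Both points are handled exactly as in Section~\ref{sec:protocol} and the proof of Theorem~\ref{thm:1} --- via the large-population Poisson-aggregation heuristic and the memorylessness of the request process --- so once those modeling assumptions are granted, the remainder is the bookkeeping sketched above.
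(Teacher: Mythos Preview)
Your proposal is correct and follows essentially the same route as the paper: the text preceding the theorem already reduces the distortion constraints to the inflated thresholds $\epsilon'_j=\epsilon_j/(1-P^{(2)}_i)$, applies Lemma~\ref{lem:01} at the AS level to obtain equation~\ref{eq:r1i}, and then invokes ``similar to the reasoning in the proof of Theorem~\ref{thm:1}'' together with the identification $\lambda^{(1)}_i=N\lambda_i$ to substitute $P^{(1)}_i$ for $\tau^{(1)}_i/(\tau^{(1)}_i+\theta^{(1)}_i)$ and $N\lambda_i(1-P^{(1)}_i)$ for $1/(\tau^{(1)}_i+\theta^{(1)}_i)$. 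Your write-up is in fact more explicit than the paper's---particularly the memorylessness argument for $\theta^{(1)}_i=1/(N\lambda_i)$ and the acknowledgment that the i.i.d.\ hypothesis on the union process is a modeling idealization---but there is no substantive difference in strategy.
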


According to the conditions where nonzero update rates are required in the statement of Theorem~\ref{thm:2}, it can be observed that if an item is in a cache in an AS with high probability (large $P^{(2)}_i$), very few  control packets are needed to inform it of the same item in another AS. In other words, when an AS contains an item with high probability, it does not need to know about that item's status in the other AS and the distortion criteria is satisfied with very low rate or even no updates.

\begin{figure}[http]
    \center
      \includegraphics[scale=0.18,angle=0]{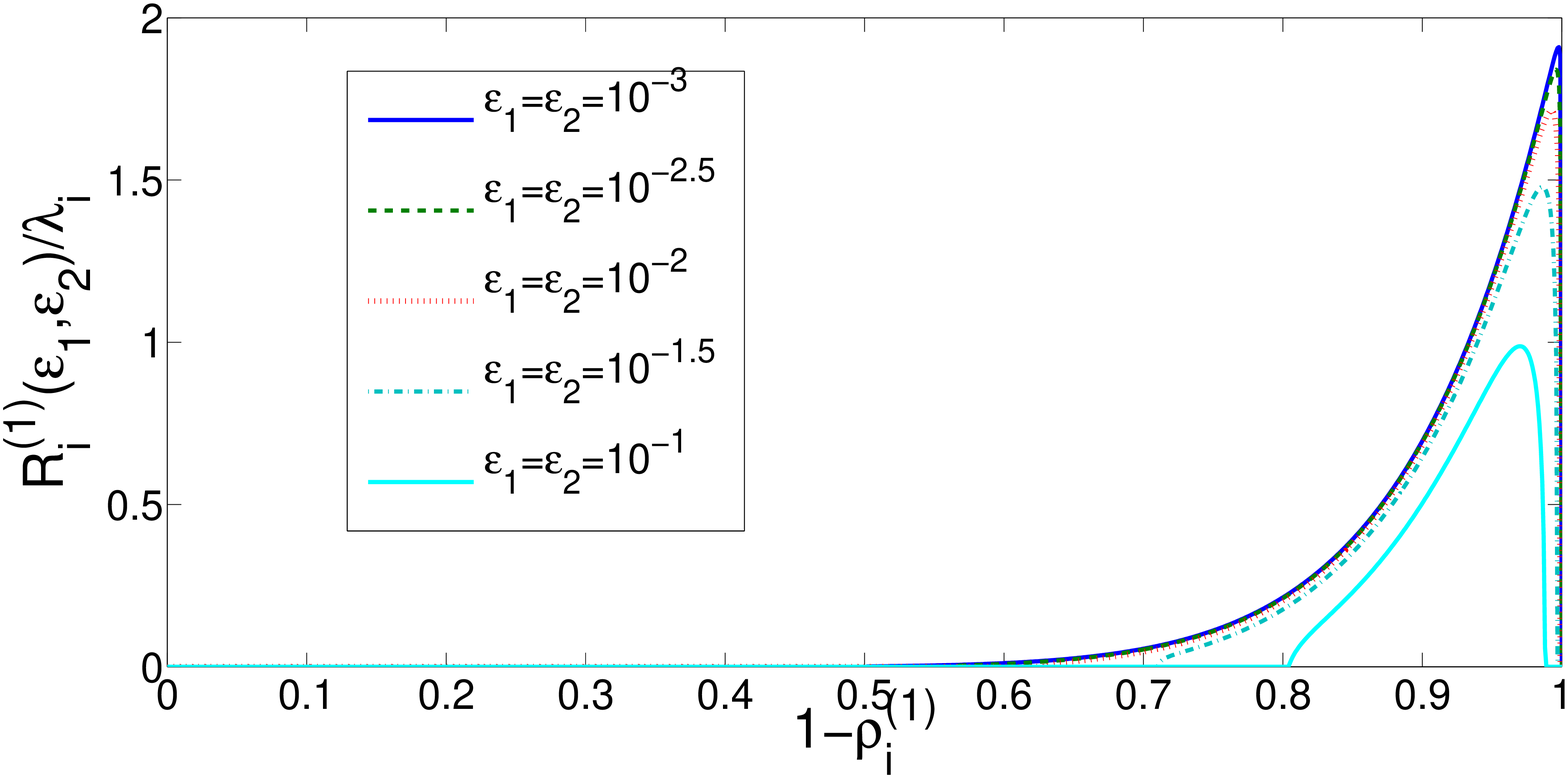}\\
      \caption{\textit{Total item $i$ AS-AS update rate versus $1-\rho_i$ for different distortion criteria. (First AS contains $N^{(1)}_c=10$ caches, and item $i$ is in the second AS with low probability ($P^{(2)}_i=0.001$))}}
    \label{fig:updatemiss2AS}
\end{figure}

In Fig. \ref{fig:updatemiss2AS} we fix the probability of $i$ being in $AS_2$ to a low value ($0.001$) and plot the changes of the ratio of this rate to the request rate versus the probability of item $i$ not being in each cache in $AS_1$, which we assume contains $N^{(1)}_c=10$ caches.
If the item is not in a cache in an AS with high probability, but the probability of that item being in another cache is high  (low $1-\rho_i$ resulting in high $P^{(1)}_i$), very few updates will be needed. Increasing $1-\rho^{(1)}_i$ will decrease $P^{(1)}_i$, so a higher update rate is needed. When this probability is higher than some point, then $P^{(1)}_i$ is very low and the probability of any change in this item is very low, thus again very few updates may be needed.

\section{Application to Cost Analysis}
\label{sec:costanalysis}


Fig. \ref{fig:netmodel} illustrates the network model studied in this section. This model consists of entities in three substrates: users are located on the first layer; a network of caches with the CRS on the second level; external resources (caches in other networks, Internet, etc.) on the third.

\begin{figure}[http]
    \center
      \includegraphics[scale=0.4,angle=0]{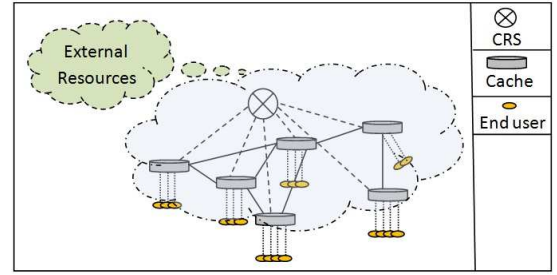}\\
      \caption{\textit{Network Model.}}
    \label{fig:netmodel}
\end{figure}

In this section, we try to calculate the total cost (download cost+update cost) in scenario $\romannumeral 1$ and look at the trade-offs between the cost, the number of caches, and the size of caches. We need to define the relative costs of the different actions. We assume that the state update process for item $i$ has a per bit cost of $\xi^{up}_i$ for sending data from the cache to CRS. 
On the other hand, the requested piece of content $i$ may be downloaded from the local cache with cost 0 (with probability $\rho_i$ of being in this cache), from another cache inside the same network with some per bit cost $\xi^{int}_i$ (with a probability we denote by $P_i - \rho_i$, where $P_i$ is the probability that content $i$ is within the AS's domain), or it must be downloaded from an external server with some other cost $\xi^{ext}_i>\xi^{int}_i$ (with probability $(1-P_i)$). Obviously, $\rho_i \leq P_i \leq 1$. 

%

The total download cost of item $i$ with size $B_i$ bits in the sub-network is
\begin{equation}
\varphi^{dl}_i = \lambda_i \times B_i \times ((P_i-\rho_i) \times \xi^{int}_i + (1-P_i) \times \xi^{ext}_i),
\end{equation}

Each cache sends update packets to provide its CRS with the state of item $i$ in its local content store. Each update packet contains the ID of the cache issuing the query, the ID of the updated item and the new state. $\log N_c$ bits are needed to represent the cache. Item $i$ is updated with probability $\beta_i=\frac{\lambda_i(1-\rho_i)}{\sum_{k=1}^M\lambda_k(1-\rho_k)}$, which results in a code length of at least $-\log \beta_i$ bits. Thus, the length of each update packet is $l_i\geq \log N_c-\log \beta_i +1$. Hence, the total cost for updating information about item $i$ in the sub-network is $\varphi^{up}_i = R_i(\epsilon_1,\epsilon_2) \times l_i \times \xi^{up}_i$, where $R_i(\epsilon_1,\epsilon_2)$ is the minimum rate at which the update state of item $i$ must be announced to CRS so that a distortion criteria defined by $(\epsilon_1,\epsilon_2)$ is satisfied.

The total cost for item $i$ is the sum of the update and download costs: $\varphi_i =\varphi^{dl}_i + \varphi^{up}_i$. Therefore,$\varphi_i$ is
\begin{eqnarray}
\lambda_i B_i  ((P_i-\rho_i) \xi^{int}_i + (1-P_i) \xi^{ext}_i) + R_i(\epsilon_1,\epsilon_2) l_i \xi^{up}_i
\end{eqnarray}

Then the total cost for all the items is
\begin{eqnarray}
\varphi &=& \sum_{i=1}^M\left[\lambda_i B_i \left((P_i-\rho_i)\xi^{int}_i + (1-P_i)\xi^{ext}_i\right) \right.\nonumber\\ 
& & \left. + R_i(\epsilon_1,\epsilon_2)l_i \xi^{up}_i \right] \label{eq:totcost}
\end{eqnarray}
We now compute some bounds on $P_i$ based upon the allowed distortion.

Recall that  $\mathcal{V}_c$ is the set of caches, $\rho_i$ denotes the probability that a specific cache contains item $i$, $S_{ic}$ represents the state of an item $i$ at a node $j$, which is $1$ if cache $c$ contains item $i$, and $0$ otherwise, and $\hat{S}_{ic}$ denotes the corresponding state perceived by the CRS. A request from a user is not served internally (by a cache in second layer) either if no cache contains it \begin{equation} Pr(\forall\ c\in \mathcal{V}_c:S_{ic}=0)=(1-\rho_i)^{N_c}, \end{equation}
or if there are some caches containing it but the CRS is not aware of that.
\begin{eqnarray}
  &&Pr(\exists\ c \in \mathcal{V}_c:\ i \in c\ \&\ \hat{S}_{ic}=0) \nonumber \\
  &&=\sum_{k=1}^{N_c} \sum_{1\leq c_1<..<c_k\leq N_c} Pr(^{i \notin \mathcal{V}_c-\{c_1,...,c_k\}\ \&\ }_{[\hat{S}_{ic_l}=0,\ S_{ic_l}=1]^k_{l=1}}) \nonumber \\
  &&=\sum_{k=1}^{N_c} \sum_{1\leq c_1<..<c_k\leq N_c} (1-\rho_i)^{N_c-k} \Pi_{l=1}^k Pr(^{\hat{S}_{ic_l}=0}_{S_{ic_l}=1}) \nonumber \\
  &&=\sum_{k=1}^{N_c} (^{N_c}_k) (1-\rho_i)^{N_c-k} \sigma_{1_i}^k  \nonumber \\
	&&=(1-\rho_i+\sigma_{1_i})^{N_c}-(1-\rho_i)^{N_c} \nonumber
\end{eqnarray}
where $\sigma_{1_i}\geq 0$ is the probability that $i$ exists in cache $c$ and the CRS does not know about it.

Thus the probability that a request is served externally is $1-P_i$ which equals
\begin{eqnarray}
&(1-\rho_i)^{N_c} +((1-\rho_i+\sigma_{1_i})^{N_c}-(1-\rho_i)^{N_c})& \nonumber \\
&= (1-\rho_i+\sigma_{1_i})^{N_c},& \label{eq:Pext}
\end{eqnarray}
where under the independent cache assumption, the state of an item in a cache is independent of the state in another cache. The probability $\sigma_{1_i}\geq 0$ is always less than the probability of $i$ being in cache $j$ ($\sigma_{1_i}\leq \rho_i$). If the state updates are done at rate greater than $R_i(\epsilon_1,\epsilon_2)$, it will be less than $\epsilon_1$. Hence, denoting $[x]^+ = \max(x,0)$, 
\begin{equation}
[1-(1-\rho_i+\epsilon_1)^{N_c}]^+\leq P_i \leq 1-(1-\rho_i)^{N_c}.
\end{equation}


Fig. \ref{fig:lruFixLcNc} illustrates the changes of update and total cost when the size of each cache is limited to $L_c=100$. The request rate received by each cache is inversely proportional $N_c$ (the request rate per user is assumed to be fixed and independent of $N_c$), and the update packet length increases logarithmically with the number of caches. The total update rate per cache is almost linearly decreasing with $N_c$, hence the total update rate will almost be stable when $N_c$ varies (changes are in the order of $\log N_c$). Increasing $N_c$, however, increases the probability of an item being served internally and thus decreases the download, and consequently the total cost.
\begin{figure}[http]
    \center
      \includegraphics[scale=0.18,angle=0]{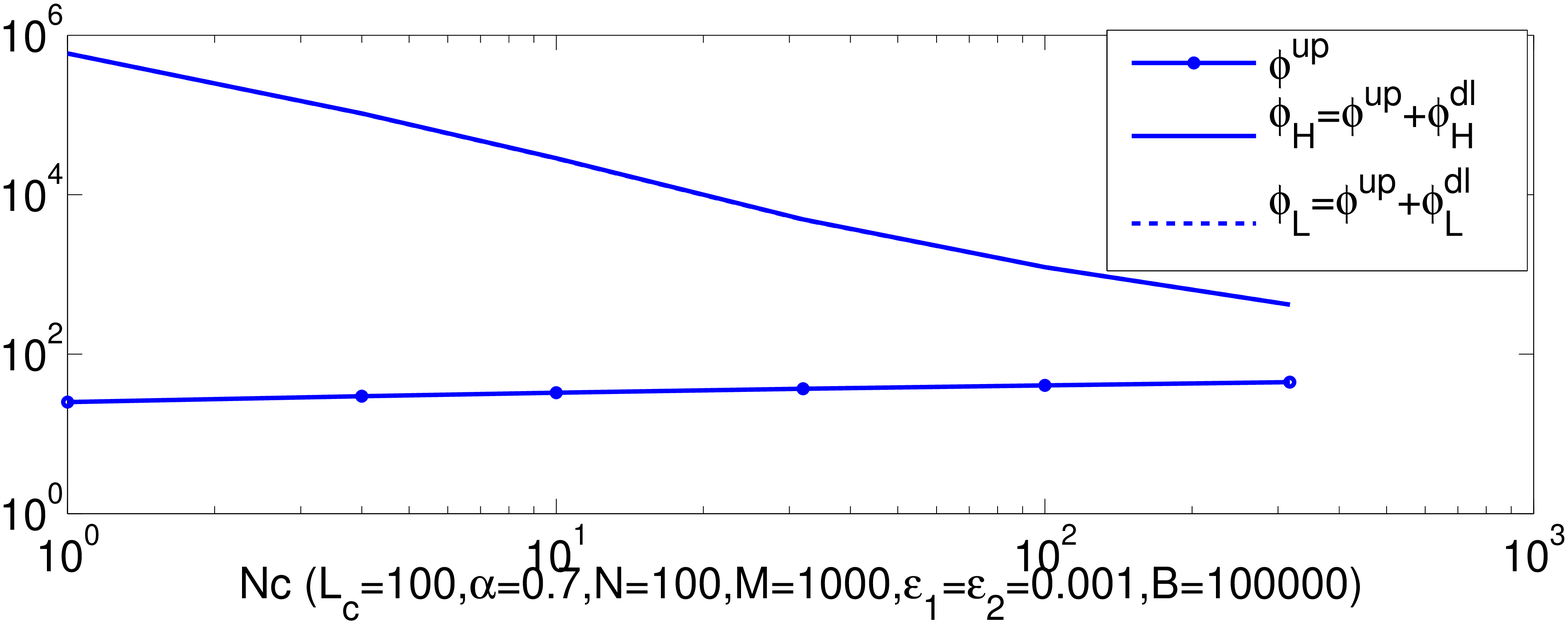}\\
      \caption{\textit{Scenario $\romannumeral 1$, Total update cost ($\varphi^{up}$) and Total Cost (lower $\varphi_L$ and upper bounds $\varphi_H$), when the storage size per cache is fixed ($L_c=100$), vs. the number of caches ($N_c$).}}
        \label{fig:lruFixLcNc}
      \includegraphics[scale=0.18,angle=0]{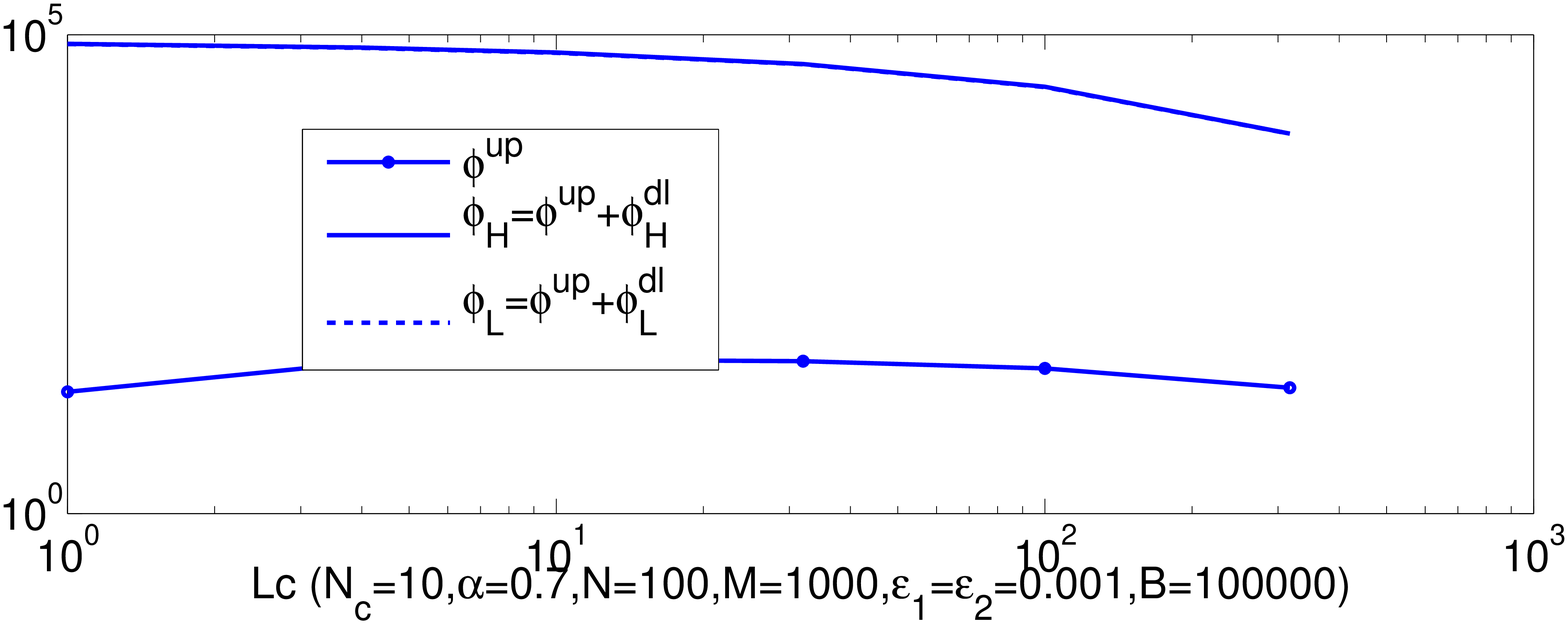}\\
      \caption{\textit{Scenario $\romannumeral 1$, Total update cost ($\varphi^{up}$) and Total Cost (lower $\varphi_L$ and upper bounds $\varphi_H$),when the number of caches is fixed ($N_c=10$), vs. the size of caches ($L_c$).}}
    \label{fig:lruFixNcLc}
\end{figure}

In Fig. \ref{fig:lruFixNcLc}, we fix the number of caches in the AS ($N_c=10$) and study the effects of cache storage size on the update and total cost. Increasing the cache size simply increases the probability of an item being served internally and decreases the download cost. However, the update cost shows more complicated behavior when changing the storage size. Looking at each cache, very small cache size leads to very large durations where that item is not in that cache and consequently, the update rate would be low. Increasing the storage size will increase the probability of that item being in the cache, and thus increases the update rate. This increase will reach its highest value for a certain value of cache size.
For larger values of cache beyond a threshold, the item is in the cache most of the time. Therefore, we need less updates and increasing the cache size will increase the duration of the item being in the cache leading to lower update messages. Since the total cost mostly depends on the download cost, by increasing the cache size, this value reaches its minimum value.

\subsection{Optimized Cache Management}
\label{subsec:optimumi}

We now turn our attention to minimizing the total cost for given $N_c$ and $L_c$.

 Under a Zipf popularity distribution, many rare items will not be requested again while they are in the cache under the LRU policy. We can rewrite the total cost if the caches only keep the items with popularity from 1 up to $i^*$. 
\begin{eqnarray}
\varphi &=& \sum_{i=1}^{i^*} \lambda_i  B_i \left((P_i-\rho_i)\xi^{int}_i + (1-P_i)\xi^{ext}_i\right)\nonumber \\ &+& \sum_{i=i^*+1}^M  \lambda_i  B_i \xi^{ext}_i + \sum_{i=1}^{i^*} R_i(\epsilon_1,\epsilon_2) l_i \xi^{up}_i  \label{eq:phiwthr}
\end{eqnarray}


Now just $i^*$ different pieces of content may be stored in each cache, so probability of an item $i=1,...,i^*$ being in a cache ($\rho_i$) is changed, which in turn changes $P_i$ and $R_i$.

Fig. \ref{fig:optIstar} demonstrates the total cost versus the caching popularity threshold $i^*$, for different number and size of content stores, and acceptable distortions.

If just a very small number of items (small $i^*$) are kept inside cache layer, then the download cost for those which are not allowed to be inside caches will be the dominant factor in the total cost and will increase it. On the other hand, if a lot of popularity classes are allowed to be kept internally, then the update rate is increased and also the probability of the most popular items being served internally decreases, so the total cost will increase. There is some optimum caching popularity threshold where the total cost is minimized. This optimum threshold is a function of the number and size of the stores, distortion criteria, per bit cost of downloads and updates.

\begin{figure}[http]
    \center
      \includegraphics[scale=0.18,angle=0]{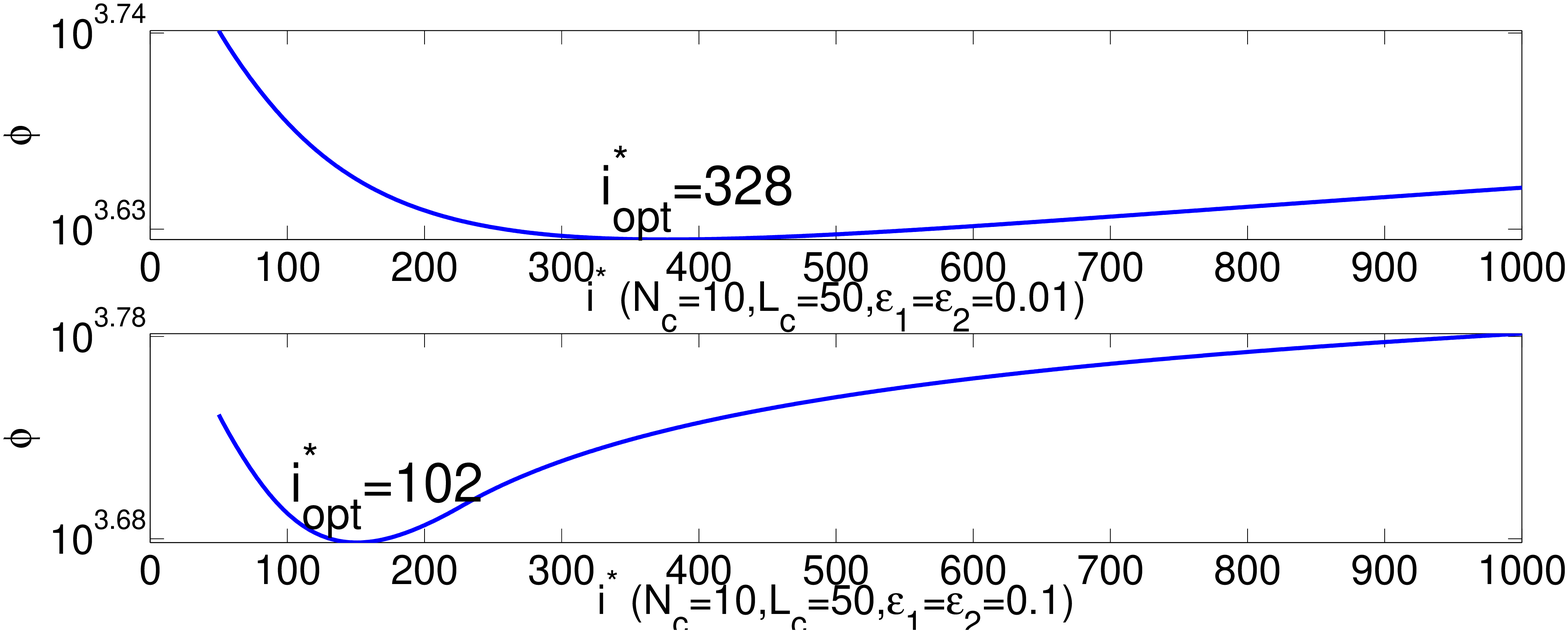}\\
      \includegraphics[scale=0.18,angle=0]{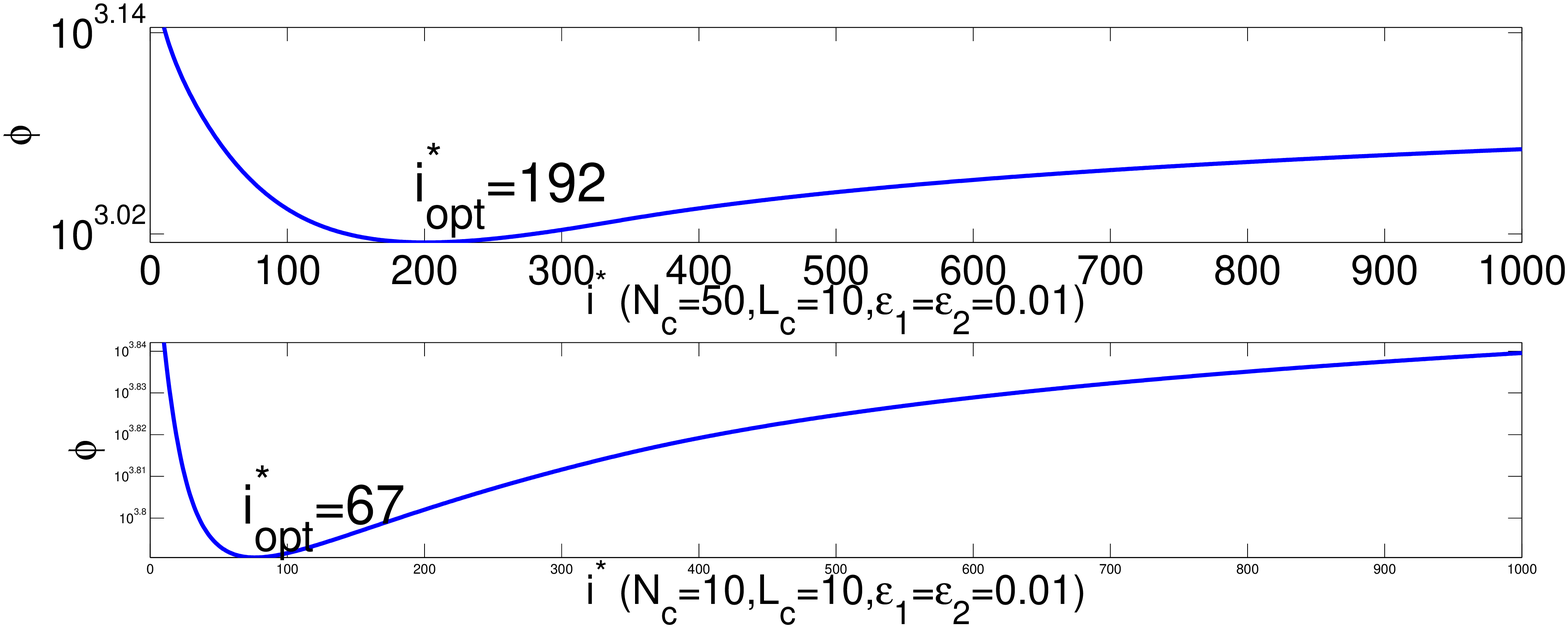}\\
      \caption{\textit{Update, Download, and Total cost when only the $i^*$ most popular items are allowed to be stored inside caches ($N=100,M=1000,B=10^4$).}}
    \label{fig:optIstar}
\end{figure}

To find the optimal $i^*$, assume that all the items have the same size ($B_i=B$) and the per bit costs is fixed for all popularity classes ($\xi_i^{int}=\xi^{int}$, $\xi_i^{ext}=\xi^{ext}$, $\xi_i^{up}=\xi^{up}$). We can rearrange equation \ref{eq:phiwthr}:
\begin{eqnarray}
\label{eq:phiministar}
\varphi &=& \varphi_1 - \varphi_2 + \varphi_3
\end{eqnarray}
where $\varphi_1=B \lambda \xi^{ext}$ is the total cost if no  cache  exists and all the  requests are served externally; $\varphi_2=B\lambda(\xi^{ext}-\xi^{int})\sum_{i=1}^{i^*} \alpha_i P_i + B\lambda \xi^{int} \sum_{i=1}^{i^*} \alpha_i \rho_i$) corresponds to the benefit of caching (cost reduction due to caching); and $\varphi_3=\xi^{up} \sum_{i=1}^{i^*} R_i(\epsilon_1,\epsilon_2)l_i$ is the caching overhead cost due to the updates.  We need to calculate the value of $i^*$ such that the cost of caching is dominated by its advantage; i.e. we need to maximize $\varphi_2-\varphi_3$.

This can be done using numerical methods which will lead to a unique $i^*$ for each network setup (fixed parameters). However, the network characteristics and the request pattern are changing over time, so it seems that it is better to have a mechanism to dynamically optimize the cost by selecting the caching threshold ($i^*$) according to the varying network features.



In such a mechanism, the CRS can keep track of requests and have an estimation of their popularity. For those requests which are served locally the CRS can have an idea of the popularity based on the updates that receives from all the caches; i.e. the longer an item stays in a cache, the more popular it is. It can also take into account the local popularity of the items. The CRS can then dynamically search for the caching threshold which minimizes the total cost by solving equation~\ref{eq:phiministar}. Once the CRS determines which items to keep internally, it will set/reset a flag in each CRRep so that the local cache knows to store or not to store the requested piece of content.

\subsection{Total Cost in the AS-AS Scenario $\romannumeral 2$}

The AS-AS interaction to keep each other updated will add an overhead to the total cost calculated in equation \ref{eq:totcost}. So assuming $\xi^{extup}_i$ for the per bit cost of the control packets between the ASs, the total cost is
\begin{eqnarray}
&\varphi =\sum_{i=1}^M \left[ \lambda_i B_i \left((P_i-\rho_i) \xi^{int}_i \right. \right.& \nonumber\\ 
&+ \left. (1-P_i) \xi^{ext}_i\right) + R_i(\epsilon_1,\epsilon_2) l_i \xi^{up}_i + \left. R^{(1)}_i l^{(1)}_i \xi^{extup}_i \right]&
\end{eqnarray}
where $l^{(1)}_i$ is the length of the control packet sent from $AS_1$ to $AS_2$. Let $\beta^{(1)}_i$ denote the probability of a change in item $i$'s status in $AS_1$. This probability equals the probability of an item $i$ change in an $AS_1$ cache while no other caches contains it.
\begin{eqnarray}
\beta^{(1)}_i = \frac{(1-\rho_i)^{N_c-1}\beta_i}{\sum_{j=1}^M (1-\rho_j)^{N_c-1}\beta_j} \nonumber
\end{eqnarray}
Assuming $N_{AS}$ number of ASs, the inter-AS control packet length for updating item $i$'s status is $l^{(1)}_i = \log N_{AS}-\log \beta^{(1)}_i +1$.

Fig. \ref{fig:cost2ASvsLc} shows the internal update cost (cache-CRS update cost), external update cost (AS-AS update cost), and total cost versus the storage size for fixed number of caches ($N_C=10$). In this Figure we assume $\xi^{extup}_i=\xi^{ext}_i$, and $\xi^{up}_i=\xi^{int}_i$. Increasing the storage size of each cache increases the probability of an item being in at least a cache in an AS and decreases the rate of updates and the external update cost. For low storage size, the AS-AS update rate will be almost the same as cache-CRS update rate, thus the external update cost will be higher than the internal update cost by a factor proportional to $\xi^{extup}_i/\xi^{up}_i$ and $E[l^{(1)}_i]/E[l_i]$. For large enough storage size, the probability of an item being in at least one cache is much more than the probability of it being in a specific cache, so the AS-AS update rate will be lower than the cache-CRS update rate and the corresponding costs are being closer to each other.

\begin{figure}[http]
    \center
      \includegraphics[scale=0.18,angle=0]{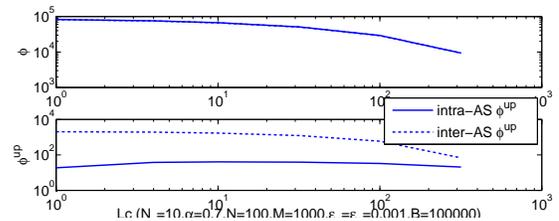}\\
      \caption{\textit{Scenario $\romannumeral 2$, (a) Total Cost, (b) Total update cost (intra-AS and inter-AS) vs. the size of caches (Lc).}}
    \label{fig:cost2ASvsLc}
\end{figure}



\section{Conclusions}
\label{sec:conclusion}

We formulated a distortion-based protocol overhead model for ICNs. Some simple content distribution networks were then considered as examples to show how this framework can be used. We calculated the overhead of keeping the control plane informed about the state of the content. We also studied the total cost of data retrieval and observe that with limited cache storage sizes, allowing all the items to have the opportunity to be stored inside the network's caches is not efficient. For the case with a central resolution system in each sub-network and with LRU cache replacement policy, an algorithm has been proposed that can dynamically determine which items should be cached inside the AS at any time such that the total cost of data retrieval is minimized.

\bibliographystyle{abbrv}
\bibliography{acmicn14}

\appendix
{\textbf{Proof of Lemma~\ref{lem:01}: }}The distortion criteria is defined as $D_1 = Pr(S_X=1,\hat{S}_X=0) \leq \epsilon_1$ and $D_2 = Pr(S_X=0,\hat{S}_X=1) \leq \epsilon_2$. It can be seen that
\begin{equation}
Pr(S_X=1)=\frac{\tau_X}{\tau_X+\theta_X} \mbox{ and }Pr(S_X=0)=\frac{\theta_X}{\tau_X+\theta_X} \mbox{.   }\nonumber
\end{equation}
There are three cases where a zero update rate can satisfy the distortion criteria.\\
$\bullet$ 1) If $Pr(S_X=1)\leq \epsilon_1$, then keeping $\hat{S}_X$ constantly equal to $'0'$ will result in $D_1=Pr(S_X=1)\leq \epsilon_1$ and $D_2=0<\epsilon_2$.\\
$\bullet$ 2) If $Pr(S_X=0)\leq \epsilon_2$, then keeping $\hat{S}_X$ constantly equal to $'1'$ will result in $D_1=0< \epsilon_1$ and $D_2=Pr(S_X=0)\leq \epsilon_2$.\\
$\bullet$ 3) If $\epsilon_1 Pr(S_X=0)+\epsilon_2 Pr(S_X=1) \geq Pr(S_X=1)Pr(S_X=0) $, then we can find some probability $1-\frac{\epsilon_1}{Pr(S_X=1)} \leq \rho_0 \leq \frac{\epsilon_2}{Pr(S_X=0)}$, such that assigning '1' to $\hat{S_X}$ randomly  and independently of $S_X$ results in $D_1=(1-\rho_0)Pr(S_X=1)\leq \epsilon_1$, and $D_2=\rho_0Pr(S_X=0) \leq \epsilon_2$.

Thus in the following we concentrate on the cases where $Pr(S_X=1)>\epsilon_1$, $Pr(S_X=0)>\epsilon_2$, and $\epsilon_1 Pr(S_X=0)+\epsilon_2 Pr(S_X=1) < Pr(S_X=1)Pr(S_X=0) $. Note that we assume that $\epsilon_1+\epsilon_2\leq 1$, then $\frac{\epsilon_2}{1-\epsilon_2}\leq \frac{1-\epsilon_1}{\epsilon_1}$, and the first two regions can be summarized in the region where $\frac{\epsilon_2}{1-\epsilon_2}\leq \frac{\theta}{\tau} \leq \frac{1-\epsilon_1}{\epsilon_1}$.

Let $U^{1}_X(\epsilon_1)$ and  $U^{2}_X(\epsilon_2)$ denote the ratio of updated type I and II changes to the total number of corresponding changes, respectively, such that the distortion criteria is satisfied. The false negative alarm is generated during the $m^{th}$ 'up' period ($Z_m$) if a type I change in the state of $X$ at time $T_{2m-1}$ is not announced to the control plane while the previous state ('0') was correctly perceived by the control plane; we show this event by $W^1_m=1$, and its probability is given by
\begin{eqnarray}
&Pr(W^1_m=1) = (1-U^1_X(\epsilon_1))Pr(\hat{S}_X=0|S_X=0)& \nonumber \\
&= (1-U^1_X(\epsilon_1))(\frac{Pr(S_X=0)-Pr(S_X=0,\hat{S}_X=1)}{Pr(S_X=0)}& \nonumber \\
&= (1-U^1_X(\epsilon_1))(1-D_2\frac{\tau_X+\theta_X}{\theta_X})&
\end{eqnarray}

In this case $\hat{S}_X=0$ during the time where $S_X=1$. So assuming that the $m^{th}$ such change is perceived wrong by the control plane, $Z_m$ is the time interval where the control plane has the type I wrong information about the state of $X$. Thus, the probability of type I error, and consequently type I distortion can be calculated as the ratio of total time of type I error over some time interval $[0,w]$ when $w\rightarrow \infty$.
\begin{eqnarray}
   D_1&=&E[\frac{1}{w}\sum_{m=1}^{N_w}1_{[W^1_m=1]}Z_m] \nonumber \\
	&=& \frac{1}{w}E[1_{[W^1_m=1]}Z_m]E[N_w] \nonumber \\
	&=&\frac{\tau_X}{\tau_X+\theta_X}Pr(W^1_m=1) \nonumber \\
	&=&\frac{\tau_X}{\tau_X+\theta_X}(1-U^1_X(\epsilon_1))(1-D_2\frac{\tau_X+\theta_X}{\theta_X})
\end{eqnarray}

Similarly, a false positive alarm is generated when a type II change is not announced while the previous perceived state ('1') was correct, and assuming that this is the $m^{th}$ such change, $Y_{m+1}$ is the time interval that the control plane has type II wrong information about $X$; let $W^2_m=1$ denote this event. Thus,
\begin{eqnarray}
&Pr(W^2_m=1) = (1-U^2_X(\epsilon_2))Pr(\hat{S}_X=1|S_X=1)& \nonumber \\
&= (1-U^2_X(\epsilon_2))\frac{Pr(S_X=1)-Pr(S_X=1,\hat{S}_X=0)}{Pr(S_X=1)} &\nonumber\\
&= (1-U^2_X(\epsilon_2))(1-D_1\frac{\tau_X+\theta_X}{\tau_X})\mbox{    and}& \nonumber\\
   &D_2=E[\frac{1}{w}\sum_{m=1}^{N_w}1_{[W^2_m=1]}Y_{m+1}] & \nonumber \\
	&= \frac{1}{w}E[1_{[W^2_m=1]}Y_{m+1}]E[N_w]=\frac{\theta_X}{\tau_X+\theta_X}Pr(W^2_m=1)& \nonumber \\
	&=\frac{\theta_X}{\tau_X+\theta_X}(1-U^2_X(\epsilon_2))(1-D_1\frac{\tau_X+\theta_X}{\tau_X})&\nonumber
\end{eqnarray}

To satisfy the distortion criteria we need $D_1\leq \epsilon_1$ and $D_2 \leq \epsilon_2$. The update rates $U^1_X(\epsilon_1)$ and $U^2_X(\epsilon_2)$ then can be written as
\begin{eqnarray}
&U^{1}_X(\epsilon_1) = 1-\frac{D_1 \frac{\theta_X}{\tau_X}}{\frac{\theta_X}{\tau_X+\theta_X}-D_2} \geq 1-\frac{\epsilon_1 \frac{\theta_X}{\tau_X}}{\frac{\theta_X}{\tau_X+\theta_X}-\epsilon_2}& \label{eq:U1} \\
&U^{2}_X(\epsilon_2) = 1-\frac{D_2\frac{\tau_X}{\theta_X}}{\frac{\tau_X}{\tau_X+\theta_X}-D_1} \geq 1-\frac{\epsilon_2\frac{\tau_X}{\theta_X}}{\frac{\tau_X}{\tau_X+\theta_X}-\epsilon_1}& \label{eq:U2}
\end{eqnarray}

Thus, the total number of updates announced to the control plane divided by the total number of changes is given by $U_X(\epsilon_1,\epsilon_2) = U^{1}_X(\epsilon_1) + U^{2}_X(\epsilon_2)$.

Note that the total rate of type I changes, which is equal to the rate of type II changes in average is given by $\frac{1}{\tau_X+\theta_X}$ changes per second, thus total number of updates per second is given by
\begin{eqnarray}
R_X(\epsilon_1,\epsilon_2) = \frac{U_X(\epsilon_1,\epsilon_2)}{\tau_X+\theta_X} \label{eq:RX}
\end{eqnarray}

Combining equations \ref{eq:U1}-\ref{eq:RX} the Lemma is proved.

\end{document}